\spnewtheorem{observation}{Observation}{\bfseries}{\itshape}
\spnewtheorem{claimN}{Claim}{\bfseries}{\itshape}
\spnewtheorem{remarkLI}{Remark}{\bfseries}{\itshape}
\definecolor{red}{RGB}{255,0,0}
\definecolor{blue}{RGB}{0,0,255}
\definecolor{green}{RGB}{0,255,0}
\newcommand {\abs}[1]  {\left\vert#1\right\vert}
\newcommand {\set}[1]  {\left\{#1\right\}}
\newcommand {\defined} {\stackrel{def} {=}}
\newcommand {\nph}     {\textsc{NP}\textrm{-hard}}
\newcommand {\runningtitle}[1] {\vspace{0.5ex}\noindent{\textbf{\boldmath #1:}}}
\newcommand {\commentfig}[1] {#1}
\DeclareMathOperator{\cs}{cs}
\newcommand{\bb}{{\cal B}}
\newcommand{\maxcut} {\textsc{MaxCut}}
\newcommand{\cutsize}[1] {\cs(#1)}
\newcommand{\vect}[1]{\mathbf{#1}}
\title{The Maximum Cardinality Cut Problem is Polynomial in Proper Interval Graphs}
\author{Arman Boyac{\i}~\inst{1} \and Tinaz Ekim~\inst{1} \and Mordechai Shalom~\inst{2}}
\institute{
Department of Industrial Engineering, Bo\u{g}azi\c{c}i University, Istanbul, Turkey \\
\email{[arman.boyaci, tinaz.ekim]@boun.edu.tr}
\and
TelHai College, Upper Galilee, 12210, Israel\\
\email{cmshalom@telhai.ac.il}
}
\begin{document}
\maketitle
\begin{abstract}
It is known that the maximum cardinality cut problem is $\nph$ even in chordal graphs. In this paper, we consider the time complexity of the problem in proper interval graphs, a subclass of chordal graphs, and propose a dynamic programming algorithm which runs in polynomial-time.
\end{abstract}

\section{Introduction}\label{sec:intro}
A \emph{cut} of a graph $G=(V(G),E(G))$ is a partition of $V(G)$ into two subsets $S, \bar{S}$ where $\bar S=V(G)\setminus S$. The \emph{cut-set} of $(S,\bar{S})$ is the set of edges of $G$ with exactly one endpoint in $S$. The maximum cardinality cut problem ($\maxcut$) is to find a cut with a maximum size cut-set, of a given graph.

$\maxcut$ remains $\nph$ when restricted to the following graph classes: chordal graphs, undirected path graphs, split graphs, tripartite graphs, co-bipartite graphs \cite{Bodlaender00onthe}, unit disk graphs \cite{DK2007} and total graphs \cite{Guruswami1999217}. On the other hand, it was shown that $\maxcut$ can be solved in polynomial-time in planar graphs \cite{hadlock1975finding}, in line graphs \cite{Guruswami1999217}, and the class of graphs factorable to bounded treewidth graphs \cite{Bodlaender00onthe}.

Proper interval graphs are not necessarily planar. They are not necessarily line graphs either, since the graph $\bar{A}$ consisting of 6 vertices is a proper interval graph, but a forbidden subgraph of line graphs \cite{Beineke1970129}. In \cite{BES15-MaxCut-JOCOSpecialIssue}, we have shown that co-bipartite chain graphs are not factorable to bounded treewidth graphs. Since co-bipartite chain graph are proper interval graphs, this result holds for them too. Therefore, the above mentioned results do not imply a polynomial-time algorithm for proper interval graphs.

Despite the existence of polynomial-time algorithms for some subclasses of proper interval graphs (split indifference graphs \cite{Bodlaender2004} and co-bipartite chain graphs \cite{BES15-MaxCut-JOCOSpecialIssue}), the complexity of $\maxcut$ in proper interval graphs was open. In this work, we generalize the dynamic programming algorithm in \cite{BES15-MaxCut-JOCOSpecialIssue} to proper interval graphs using the bubble model of proper interval graphs introduced in \cite{HMP09}.

\section{Preliminaries}\label{sec:prelim}
\runningtitle{Graph notations and terms}
Given a simple graph (no loops or parallel edges) $G=(V(G),E(G))$ and a vertex $v$ of $G$, $uv$ denotes an edge between two vertices $u,v$ of $G$. We also denote by $uv$ the fact that $uv \in E(G)$. We denote by $N(v)$ the set of neighbors of $v$. Two adjacent (resp. non-adjacent) vertices $u,v$ of $G$ are \emph{twins} if $N_G(u) \setminus \set{v} = N_G(v) \setminus \set{u}$. A vertex having degree zero is termed \emph{isolated}, and a vertex adjacent to all other vertices is termed \emph{universal}. For a graph $G$ and $U \subseteq V(G)$, we denote by $G[U]$ the subgraph of $G$ induced by $U$, and $G \setminus U \defined G[V(G) \setminus U]$. For a singleton $\set{x}$ and a set $Y$, $Y + x \defined Y \cup \set{x}$ and $Y - x \defined Y \setminus \set{x}$. A vertex set $U \subseteq V(G)$ is a \emph{clique} (resp. \emph{stable set}) (of $G$) if every pair of vertices in $U$ is adjacent (resp. non-adjacent). We denote by $n$ be the number of vertices of $G$.

\runningtitle{Some graph classes}
A graph is \emph{bipartite} if its vertex set can be partitioned into two independent sets $V,V'$. We denote such a graph as $B(V,V',E)$ where $E$ is the edge set. A graph $G$ is \emph{co-bipartite} if it is the complement of a bipartite graph, i.e. $V(G)$ can be partitioned into two cliques $K, K'$. We denote such a graph as $C(K,K',E)$ where $E$ is the set of edges that have exactly one endpoint in $K$.

A \emph{bipartite chain graph} is a bipartite graph $G=B(V,V',E)$ where $V$ has a nested neighborhood ordering, i.e. its vertices can be ordered as $v_1,v_2,\ldots$  such that $N_G(v_1) \subseteq N_G(v_2) \subseteq \cdots$. $V$ has a nested neighborhood ordering if and only if $V'$ has one \cite{yannakakis1981node}. Theorem 2.3 of \cite{HPS90} implies that if $G=B(V,V',E)$ is a bipartite chain graph with no isolated vertices, then the number of distinct degrees in $V$ is equal to the number of distinct degrees in $V'$.

A co-bipartite graph $G=C(K,K',E)$ is a \emph{co-bipartite chain} (also known as co-chain) graph if $K$ has a nested neighborhood ordering \cite{heggernes2007linear}. Since $K \subseteq N_G(v)$ for every $v \in K$, the result for chain graphs implies that $K$ has a nested neighborhood ordering if and only if $K'$ has such an ordering.

A graph $G$ is \emph{interval} if its vertices can be represented by intervals on a straight line such that two vertices are adjacent in $G$ if the corresponding intervals are intersecting. An interval graph is \emph{proper} (resp. \emph{unit}) if there is a interval representation such that no interval properly contains another (resp. every interval has unit length). It is known that the proper interval graph class is equivalent to unit interval graph class \cite{Bogart199921}.

\runningtitle{Cuts}
We denote a cut of a graph $G$ by one of the subsets of the partition. $E(S,\bar{S})$ denotes the \emph{cut-set} of $S$, i.e. the set of the edges of $G$ with exactly one endpoint in $S$, and $\cutsize{S} \defined \abs{E(S,\bar{S})}$ is termed the \emph{cut size} of $S$. A maximum cut of $G$ is one having the biggest cut size among all cuts of $G$. We refer to this size as the \emph{maximum cut size} of $G$. Clearly, $S$ and $\bar{S}$ are dual; we thus can replace $S$ by $\bar{S}$ and $\bar{S}$ by $S$ everywhere. In particular, $E(S,\bar{S})=E(\bar{S},S)$, and $\cutsize{S}=\cutsize{\bar{S}}$.

\runningtitle{Bubble models}
A \emph{2-dimensional bubbles structure} $\bb$ for a finite non-empty set $A$ is a 2-dimensional arrangement of bubbles $\set{B_{i,j}~|~j \in [k], i \in [r_j]}$ for some positive integers $k, r_1, \ldots r_k$, such that $\bb$ is a near-partition of $A$. That is, $A = \cup \bb$ and the sets $B_{i,j}$ are pairwise disjoint, allowing for the possibility of $B_{i,j}=\emptyset$ for arbitrarily many pairs $i,j$. For an element $a \in A$ we denote by $i(a)$ and $j(a)$ the unique indices such that $a \in B_{i(a),j(a)}$.

Given a bubble structure $\bb$, the graph $G(\bb)$ defined by $\bb$ is the following graph:
\begin{enumerate}[i)]
\item $V(G(\bb))=\cup \bb$, and
\item $uv \in E(G(\bb))$ if and only if one of the following holds:
\begin{itemize}
\item $j(u)=j(v)$,
\item $j(u)=j(v)+1$ and $i(u) < i(v)$.
\end{itemize}
\end{enumerate}
$\bb$ is a \emph{bubble model} for $G(\bb)$.

A \emph{compact representation} for a bubble model is an array of \emph{columns} each of which contains a list of non-empty bubbles, and each bubble contains its row number in addition to the vertices in this bubble.

\begin{theorem}\cite{HMP09}\label{thm:Bubbles}
\begin{enumerate}[i)]
\item A graph is a proper interval graph if and only if it has a bubble model.
\item A bubble model for a graph on $n$ vertices contains $O(n^2)$ bubbles and it can be computed in $O(n^2)$ time.
\item A compact representation of a bubble model for a graph on $n$ vertices can be computed in $O(n)$ time.
\end{enumerate}
\end{theorem}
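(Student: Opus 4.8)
The plan is to prove part (i) by two explicit constructions --- one for each direction --- and then to read off the quantitative bounds in (ii) and (iii) from them.

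For the direction \emph{proper interval} $\Rightarrow$ \emph{bubble model} I would start from a unit interval representation, which is available because the proper and unit interval graph classes coincide. Write the interval of $v$ as $[x_v,x_v+1]$ and normalize so that $\min_v x_v = 0$, and set $j(v) := \lfloor x_v \rfloor$ and $r(v) := x_v - j(v) \in [0,1)$. Two vertices sharing a column satisfy $|x_u-x_v|<1$ and are therefore adjacent, so every column is a clique; and if $j(u) \ge j(v)+2$ then $x_u > x_v+1$, so such vertices are non-adjacent, matching the requirement that columns at distance at least $2$ carry no edges. I expect the assignment of rows to be the main obstacle, because a single integer $i(v)$ must simultaneously reproduce the bipartite adjacency between $v$'s column and \emph{both} neighbouring columns, and a naive per-column ranking by $x_v$ already fails. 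The key observation that dissolves this difficulty is that, for $u$ in column $j+1$ and $v$ in column $j$, the adjacency condition $x_u \le x_v+1$ is equivalent to the single comparison $r(u) \le r(v)$ of fractional offsets. I would therefore order all vertices by increasing $r$, breaking ties by \emph{decreasing} column index, and let $i(v)$ be the resulting global rank; a short check then gives $i(u)<i(v) \Leftrightarrow r(u)\le r(v)$ for vertices of consecutive columns, so that $G(\bb)=G$.

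For the converse, given an arbitrary bubble structure $\bb$, I would exhibit a vertex ordering witnessing that $G(\bb)$ is proper interval, namely the lexicographic order $\prec$ by $(j(v),i(v))$ with ties inside a bubble broken arbitrarily. Since any edge of $G(\bb)$ joins vertices whose columns differ by at most one, it suffices to verify the ``umbrella'' property, which characterizes proper interval graphs: if $u \prec v \prec w$ and $uw \in E(G(\bb))$ then $uv,vw \in E(G(\bb))$. I would split into the case $j(u)=j(w)$, where all three vertices lie in one column and hence form a clique, and the case $j(w)=j(u)+1$, where $v$ lies in one of the two columns and the row inequalities forced by $uw \in E(G(\bb))$ and by $\prec$ chain together (through $i(w)<i(u)<i(v)$ or $i(v)<i(w)<i(u)$ respectively) to yield both required edges. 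This completes (i); one could equivalently mirror the forward step by placing each vertex at a point of the form $j(v)+i(v)/(N{+}1)$ with a tiny column-dependent perturbation, recovering a genuine unit representation.

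Finally, both bounds follow by inspecting the forward construction. There are at most $n$ non-empty columns and the row ranks lie in $\set{1,\dots,n}$, and after inserting at most one empty column between columns originally at distance at least $2$ --- which preserves the absence of edges across the gap --- the grid has $O(n)$ columns and $O(n)$ rows, hence $O(n^2)$ bubbles; bucketing the vertices into this grid after computing a unit representation gives the $O(n^2)$ time bound of (ii). For (iii) the compact representation keeps only the at most $n$ non-empty bubbles grouped by column, which is $O(n)$ space; here the delicate point is strict linearity, which I would secure by extracting the column and offset orders directly from a linear-time proper interval recognition routine rather than by sorting, so that no $O(n\log n)$ step is incurred.
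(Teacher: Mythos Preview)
The paper does not prove this theorem at all: it is quoted verbatim from \cite{HMP09} and used as a black box, so there is no ``paper's own proof'' to compare your proposal against. Your write-up is therefore not a reconstruction of anything in the present paper but an independent sketch of the original result.

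On its own merits your sketch is sound. The forward construction is the heart of the matter, and your key observation---that for $u$ in column $j{+}1$ and $v$ in column $j$ the adjacency condition $x_u\le x_v+1$ reduces to a comparison $r(u)\le r(v)$ of fractional parts, so a \emph{single global} ranking by $r$ (with the decreasing-column tiebreak) correctly encodes the bipartite adjacencies to \emph{both} neighbouring columns simultaneously---is exactly the right idea and your tiebreak does make the equivalence $i(u)<i(v)\Leftrightarrow r(u)\le r(v)$ hold for consecutive columns. The backward direction via the umbrella ordering is also fine; your two subcases yield precisely the chains $i(w)<i(u)\le i(v)$ and $i(v)\le i(w)<i(u)$ that force the missing edges. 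Two small points worth tightening: in the forward direction you should assume (or perturb to) distinct left endpoints so that the ranks are well defined, and for part~(ii) you should make explicit that after compression the column index range is $O(n)$, since a priori $\lfloor \max_v x_v\rfloor$ is unbounded. For part~(iii) your remark that strict linearity hinges on extracting the orders from a linear-time recognizer rather than sorting is correct and is indeed how \cite{HMP09} achieves it.
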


Note that the set of vertices in two consecutive columns in $\bb$ induces a co-bipartite chain graph. In other words, a proper interval graph can be seen as a chain of co-bipartite chain graphs, see Figure \ref{fig:BubbleModel}. Using this observation, we generalize our result in \cite{BES15-MaxCut-JOCOSpecialIssue}. To keep the analysis simpler, we use the standard representation of the bubble model, since using the compact representation does not improve the overall running time of the algorithm.

\begin{figure}
\centering
\commentfig{\includegraphics[width=0.7\textwidth]{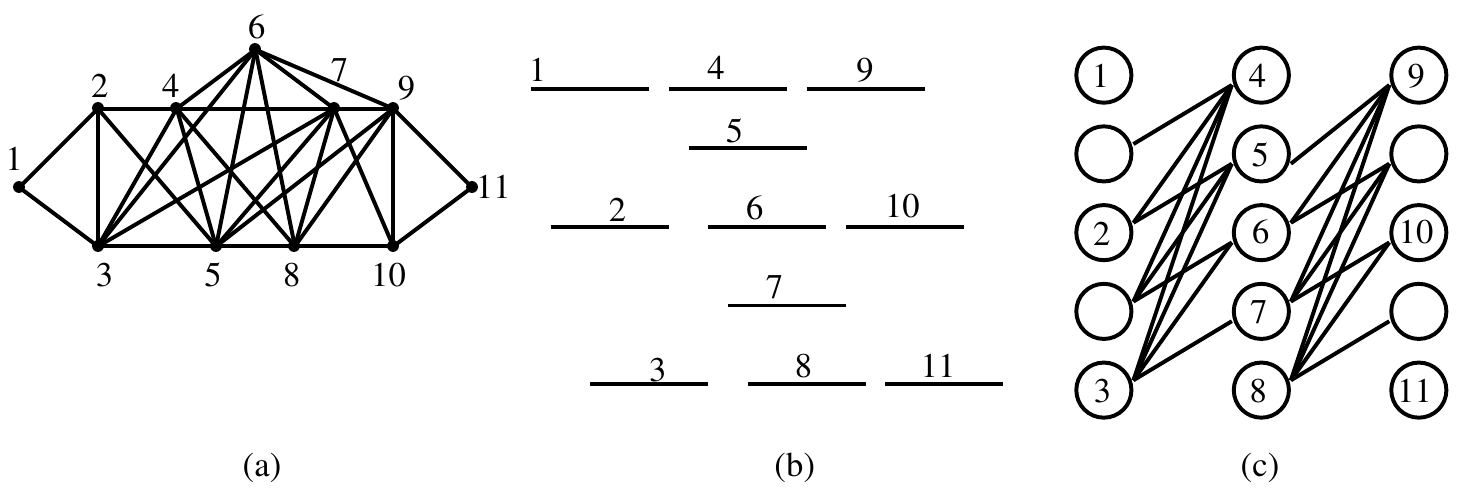}}
\caption{(a) A proper interval graph $G$, (b) an interval representation of $G$, (c) a bubble model of $G$}
\label{fig:BubbleModel}
\end{figure}

\section{Algorithm for Proper Interval Graphs}
\newcommand{\zero}{\vect{0}}
\newcommand{\one}{\vect{1}}
\newcommand{\sumvect}[1]{\sum \vect{#1}}
\newcommand{\pigdynalg}{\textsc{ProperIntervalDynamicProgramming}}

Let $G$ be a proper interval graph and $\bb=\set{B_{i,j}~|~j \in [k], i \in [r_j]}$ a bubble model for it, where $k \geq 1$ is the number of columns and $r_1, \ldots, r_k$ are the number of rows in the columns. We note that two vertices in the same bubble are twins. For a cut $S$ of $G$ and a bubble $B_{i,j}$ we denote $b_{i,j}=\abs{B_{i,j}}$, $s_{i,j} = \abs{B_{i,j} \cap S}$, and $\bar{s}_{i,j} = \abs{B_{i,j} \setminus S}$. Clearly, $0 \leq s_{i,j}, \bar{s}_{i,j} \leq b_{i,j}$, and the cut $S$ is uniquely defined by the matrix $\vect{s}$. We use $C_j$ as a shorthand for $\cup_{i=1}^{r_j} B_{i,j}$. We also denote $c_j=\abs{C_j}$, $s_j=\abs{C_j \cap S}$, and $\bar{s}_j=\abs{C_j \setminus S}$.

In what follows, we first show a recurrence relation for the size of a maximum cardinality cut in proper interval graphs and then develop a dynamic programming algorithm based on it.

\begin{theorem}\label{thm:recurrenceRelation}
The maximum cut size of a proper interval graph $G$ with a bubble model $\bb$ is
$F_{0,0}(0, 0)$
where $F_{i,j}(x, x')$ is given by the following recurrence relation:
\begin{eqnarray}
& F_{0,k+1}(0,0) = 0,\label{eqn:RecurrenceBase1} \\
& \forall j \in [0,k], & F_{0,j}(0,0) = \max_{\substack{s_{j+1} \in [0,b_{j+1}], s_{j+2} \in [0,b_{j+2}]}} F_{r_{j+1},j+1}(s_{j+1},s_{j+2}), \label{eqn:RecurrenceBase2}\\
& \forall j \in [k], \forall i \in [r_k], & F_{i,j}(x, x') = b_{i,j} (x+x') + \\
& & \max_{\substack{ L_{i,j} \leq s_{i,j} \leq U_{i,j} \\ L_{i,j+1} \leq s_{i,j+1} \leq U_{i,j+1}}}
\begin{cases} F_{i-1,j}(x - s_{i,j}, x' - s_{i,j+1})  - b_{i,j} ~s_{i,j+1}
 + \nonumber \\
s_{i,j} \left( \sum_{i'=1}^{i-1} (b_{i',j}+b_{i',j+1}) - 2 x - 2x' + s_{i,j} + 2 s_{i,j+1} \right)\\
\end{cases}\label{eqn:RecurrenceStep}
\end{eqnarray}
where $L_{i,j} = \max \left(0, x - \displaystyle \sum_{i'=1}^{i-1} b_{i',j} \right)$, $U_{i,j} = \min \left( b_{i,j},x \right)$, $L_{i,j+1} = \max \left(0, x' - \displaystyle \sum_{i'=1}^{i-1} b_{i',j+1} \right)$ and $U_{i,j+1} = \min \left( b_{i,j+1},x \right)$.
\end{theorem}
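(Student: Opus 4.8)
The plan is to prove the two inequalities ``$F_{0,0}(0,0)\ge \cutsize{S}$ for every cut $S$'' and ``$F_{0,0}(0,0)\le$ (maximum cut size)'' separately, both resting on one \emph{edge-accounting} identity. Fix a cut $S$ and write $s_{i,j}=|B_{i,j}\cap S|$, $\bar s_{i,j}=b_{i,j}-s_{i,j}$. I would first verify, by expanding \eqref{eqn:RecurrenceStep} (separating out the clique term $s_{i,j}\bar s_{i,j}$ buried in the quadratic part), that the quantity added when bubble $B_{i,j}$ is incorporated — the summand $b_{i,j}(x+x')$ plus the expression inside the braces, evaluated at $S$'s own counts — equals exactly the number of cut edges of $S$ between $B_{i,j}$ and the bubbles of smaller row index already present, i.e.\ the earlier bubbles $B_{1,j},\dots,B_{i-1,j}$ of column $j$, the pairs inside $B_{i,j}$, and the earlier bubbles $B_{1,j+1},\dots,B_{i-1,j+1}$ of column $j+1$; by the adjacency rule of $G(\bb)$ these are precisely the neighbours of $B_{i,j}$ with smaller row. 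Hence summing the increments across column $j$ yields the cut edges inside $C_j$ plus the cross edges between $C_j$ and $C_{j+1}$, and since $F_{0,j}(0,0)=\max F_{r_{j+1},j+1}(\cdot,\cdot)$ carries the optimum over $H_{j+1}:=G[C_{j+1}\cup\cdots\cup C_k]$ (with $F_{0,k+1}(0,0)=0$ on the empty tail), every edge of $G$ is accounted for exactly once. This algebraic check is the computational core.

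For the lower bound I would take an optimal cut $S^\ast$ and simply run the recurrence along it: at bubble $B_{i,j}$ choose $s_{i,j}=|B_{i,j}\cap S^\ast|$ and $s_{i,j+1}=|B_{i,j+1}\cap S^\ast|$, and in each base step \eqref{eqn:RecurrenceBase2} select the totals of $S^\ast$. The intervals $[L_{i,j},U_{i,j}]$, $[L_{i,j+1},U_{i,j+1}]$ are exactly the admissibility constraints met by the counts of a genuine cut, so this run is legal. By induction on $j$ from $k$ down to $1$, using $F_{0,j}(0,0)\ge F_{r_{j+1},j+1}(s^\ast_{j+1},s^\ast_{j+2})\ge$ (cut edges of $S^\ast$ inside $H_{j+1}$) and then adding the column-$j$ increments, the value reached is at least the number of cut edges of $S^\ast$ inside $H_j$; at $j=1$ this gives $F_{0,0}(0,0)\ge\cutsize{S^\ast}$.

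The upper bound is the delicate direction and is where I expect the main obstacle: one must show that \emph{every} value the recurrence can output is the size of an actual cut, hence at most the maximum cut size. The subtlety is that column $j+1$ is effectively decided twice — once inside the optimum of $H_{j+1}$ stored in $F_{0,j}(0,0)$, and once through the variables $s_{i,j+1}$ picked while column $j$ is processed and used for the cross edges — and these two assignments need not agree, so the recurrence might in principle combine an $H_{j+1}$-optimum with incompatible cross-edge gains and overshoot. The fact that saves soundness is structural: by the adjacency rule a vertex of $C_{j+1}$ lying in any row $i'\ge r_j$ has \emph{no} neighbour in $C_j$, so its side is irrelevant to the cross edges counted in column $j$'s pass; only rows $1,\dots,r_j-1$ of $C_{j+1}$ interact with $C_j$. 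I would therefore reconstruct a witnessing cut by reading column $j+1$'s membership from the $s_{i,j+1}$ on the interacting rows and from a suitably chosen $H_{j+1}$-optimiser (among those attaining $F_{0,j}(0,0)$) on the remaining, non-interacting rows, where the freedom is harmless; the edge-accounting identity then certifies that the recurrence's value equals the size of this cut. Pinning down this reconciliation — showing an $H_{j+1}$-optimiser agreeing with the interacting-row choices always exists, and dealing cleanly with the one-row shift between the rows $1,\dots,i$ summarised by $x'$ and the rows $1,\dots,i-1$ that actually bear cross edges — is the technical heart; once done, the two inequalities together give $F_{0,0}(0,0)=\cutsize{S^\ast}$, the maximum cut size.
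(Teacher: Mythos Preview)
Your two-inequality framework is a reasonable way to organise the argument, and you have put your finger on the one genuinely delicate point: when the recursion crosses a column boundary, the assignments in column $j+1$ seem to be chosen twice. However, your proposed resolution of this point does not go through, and the paper handles it by a different (and simpler) mechanism that you have missed.

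Your plan for the upper bound is to build a witnessing cut by taking, in column $j+1$, the values $s_{i,j+1}$ selected during column $j$'s pass on the ``interacting'' rows $1,\dots,r_j-1$, and the values coming from an $H_{j+1}$-optimiser on the remaining rows, after ``showing an $H_{j+1}$-optimiser agreeing with the interacting-row choices always exists.'' That existence claim is not true in general: the interacting rows of column $j+1$ have plenty of neighbours \emph{inside} $H_{j+1}$ (all of $C_{j+1}$ and the early rows of $C_{j+2}$), so fixing their sides to arbitrary values prescribed by column $j$'s pass can strictly decrease the best attainable cut of $H_{j+1}$. There is no reason the optimiser realising $F_{0,j}(0,0)$ should be compatible with those prescriptions, and once you overwrite it on those rows you may lose edges inside $H_{j+1}$ that the recursion has already counted. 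So this reconciliation cannot be ``pinned down''; it is the wrong fix.

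The paper avoids the whole reconciliation problem by giving $F_{i,j}(x,x')$ a semantic meaning that already carries the column-$(j+1)$ information: it is the maximum of $\cutsize{S}$ over cuts $S$ of $G_{i,j}$ (the graph induced by rows $1,\dots,i$ of column $j$ together with \emph{all} of columns $j+1,\dots,k$) subject to $\sum_{i'=1}^{i} s_{i',j}=x$ \emph{and} $\sum_{i'=1}^{i} s_{i',j+1}=x'$. With this definition there is only \emph{one} cut in play at every stage; the variable $s_{i,j+1}$ in the recurrence is literally $|B_{i,j+1}\cap S|$ for that single cut, and the inner call $F_{i-1,j}(x-s_{i,j},\,x'-s_{i,j+1})$ is again a constrained maximum over cuts of $G_{i-1,j}$ (which still contains all of column $j+1$). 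The $x'$ parameter is the reconciliation mechanism: it threads the partial column-$(j+1)$ assignment through the recursion, so nothing is decided twice. Once you adopt this semantic definition, the edge-accounting identity you describe proves the recurrence directly, and both inequalities fall out at once without any post-hoc reconstruction.
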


\begin{proof}
For convenience we add to the bubble representation, three columns, namely columns $0,k+1$ and $k+2$ with no bubbles, i.e. $r_0=r_{k+1}=r_{k+2}=0$. Throughout this proof, $j \in [0,k+2]$ and $i \in [0,r_j]$. Whenever we refer to a non-existent bubble $B_{i,j}$ (by allowing $i > r_j$) we assume $b_{i,j}=0$.

We denote by $G_{i,j}$ the subgraph of $G$ induced by the vertices in the first $i$ rows of column $j$ and all the vertices in the columns from $j+1$ to $k+1$. Formally, $G_{i,j}=G \left[ \left(\cup_{i'=1}^i B_{i',j} \right) \bigcup \left( \cup_{j'=j+1}^{k+2} B_{j'} \right) \right]$. We observe that a) $G=G_{0,0}$, b) $G_{0,j}=G_{r_{j+1},j+1}$,  and c) $G_{0,k}=G_{0,k+1}$ are empty graphs.

$F_{i,j}(x,x')$ denotes the maximum cut size among all cuts $S$ of $G_{i,j}$ such that $\sum_{i'=1}^i s_{i,j} = x$ and $\sum_{i'=1}^i s_{i,j+1} = x'$,  i.e.,
\[
F_{i,j}(x,x')=\max \set{\cutsize{S}| S \subseteq V(G_{i,j}), \sum_{i'=1}^i s_{i',j} = x, \sum_{i'=1}^i s_{i',j+1} = x'}.
\]
For $i=0$, since $\sum_{i'=1}^0 s_{i',j}=\sum_{i'=1}^0 s_{i',j}=0$ the only relevant arguments to $F_{0,j}$ are $x=x'=0$. Recalling that $G=G_{0,0}$, it is clear that the maximum cut size of $G$ is $F_{0,0}(0,0)$. We now provide a recurrence relation for $F_{i,j}(x,x')$.

\begin{figure}
\centering
\commentfig{\includegraphics[width=0.7\textwidth]{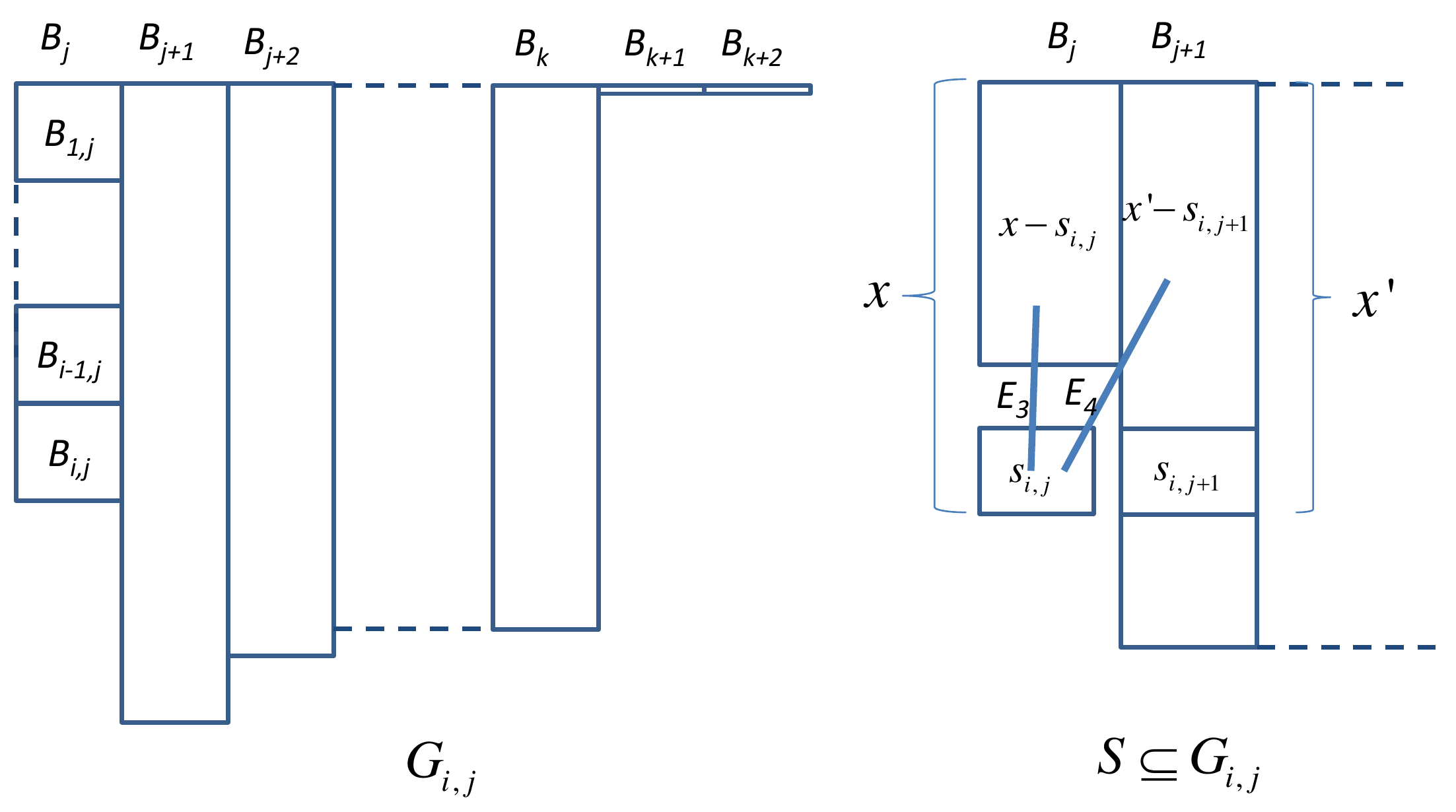}}
\caption{The graph $G_{i,j}$ and the decomposition of a cut $S$ of $G_{i,j}$ having $x$ vertices in $\cup_{i'=1}^i B_{i,j}$ and $x'$ vertices in $\cup_{i'=1}^i B_{i,j+1}$.}
\label{fig:DynamicProgramming}
\end{figure}

Since $G_{i,k+1}$ is the empty graph, (\ref{eqn:RecurrenceBase1}) clearly holds. We now compute the value $F_{0,j}(0,0)$. Recall that $G_{0,j}=G_{r_{j+1},j+1}$. Therefore, the best cut of $G_{0,j}$ can be computed by trying all possible arguments of $F_{r_{j+1},j+1}$ and choosing the maximum. Therefore, (\ref{eqn:RecurrenceBase2}) holds.

For the following discussion, refer to Figure \ref{fig:DynamicProgramming}. Consider a cut $S$ of the subgraph $G_{i,j}$ such that $\sum_{i'=1}^i s_{i',j}=x$ and $\sum_{i'=1}^i s_{i',j+1}=x'$. The graph $G_{i,j}$ can be partitioned into the subgraph $G_{i-1,j}$ and the complete graph $G[B_{i,j}]$. We partition the edges of $G_{i,j}$ into four sets: a) the edges of $G_{i-1,j}$, b) the edges of $G[B_{i,j}]$, c) the edges with one endpoint in $B_{i,j}$ and one in $\cup_{i'=1}^{i-1}B_{i',j}$, and d) the edges with one endpoint in $B_{i,j}$ and one in $\cup_{i'=1}^{i-1}B_{i',j+1}$. Accordingly, the edges of $E(S,\bar{S})$ can be partitioned into four sets:
\begin{align*}
E_S^1 & = E(S,\bar{S}) \cap E(G_{i-1,j}), \\
E_S^2 & = \left \{uv \in E(S,\bar{S}) | u,v \in B_{i,j} \right\}, \\
E_S^3 & = \left \{uv \in E(S,\bar{S}) | u \in B_{i,j}, v \in \cup_{i'=1}^{i-1} B_{i',j} \right \}, \\
E_S^4 & = \left \{uv \in E(S,\bar{S}) | u \in B_{i,j}, v \in \cup_{i'=1}^{i-1} B_{i',j+1} \right \}.
\end{align*}
Let $\underbar{S}$ be the cut that $S$ induces on $G_{i-1,j}$, and note that by the definition of a bubble representation, every vertex of $B_{i,j}$ is adjacent to every vertex of $\cup_{i'=1}^{i-1} B_{i',j}$ and to every vertex of $\cup_{i'=1}^{i-1} B_{i',j+1}$. Then, the sizes of the above sets are
\begin{align*}
\abs{E_S^1} & = \cutsize{\underbar{S}} \\
\abs{E_S^2} & = s_{i,j} \cdot \bar{s}_{i,j} \\
\abs{E_S^3} & = s_{i,j} \left( \sum_{i'=1}^{i-1} b_{i',j} - (x - s_{i,j}) \right) + \bar{s}_{i,j} (x - s_{i,j}) \\
\abs{E_S^4} & = s_{i,j} \left( \sum_{i'=1}^{i-1} b_{i',j+1} - (x' - s_{i,j+1}) \right) + \bar{s}_{i,j} (x' - s_{i,j+1}).
\end{align*}
We note that $\abs{E_S^1}+\abs{E_S^2}+\abs{E_S^3}$ does not depend on $S$, but only on the graph $G_{i,j}$ and the four values $x$,$x'$,$s_{i,j}$ and $s_{i,j+1}$. Then, maximizing $\cutsize{S}$ for any fixed choice of these values is equivalent to maximizing $\cutsize{\underbar{S}}$ for these values. We now observe that $\sum_{i'=1}^{i-1} \underbar{s}_{i',j}=x-s_{i,j}$ and $\sum_{i'=1}^{i-1} \underbar{s}_{i',j+1}=x'-s_{i',j+1}$. Then the maximum of $\cutsize{\underbar{S}}$ is $F_{i-1,j}(x - s_{i,j}, x' - s_{i,j+1})$.
Then $F_{i,j}(x,x')$ is the maximum of
\[
F_{i-1,j}(x - s_{i,j}, x' - s_{i,j+1}) + \abs{E_S^2} + \abs{E_S^3} + \abs{E_S^4}
\]
over all possible choices of $s_{i,j}, s_{i,j+1}$.
As for the possible choices of $s_{i,j}, s_{i,j+1}$, we recall that $0 \leq s_{i,j} \leq b_{i,j}$ and $0 \leq s_{i,j+1} \leq b_{i,j+1}$. Similarly, $0 \leq x - s_{i,j} \leq \sum_{i'=1}^{i-1} b_{i',j}$ and $0 \leq x' - s_{i,j+1} \leq \sum_{i'=1}^{i-1} b_{i',j+1}$. Substituting the values for $\abs{E_S^2}$, $\abs{E_S^3}$, and $\abs{E_S^4}$ and rearranging terms, we get equation (\ref{eqn:RecurrenceStep}).
\qed
\end{proof}

\alglanguage{pseudocode}
\begin{algorithm}[htbp]
\caption{$\pigdynalg$}\label{alg:DynAlg}
\begin{algorithmic}[1]
\Require {$G$ is a proper interval graph}
\Ensure {The maximum cardinality cut size of $G$}
\Statex

\State $\bb \gets$ a bubble representation of $G$ with $O(\abs{V(G)}^2)$ bubbles.

\For {$i = 1$ \textbf{to} $r_k$}
    \State $F_{i,k+1}(0,0) \gets 0$
\EndFor
\Statex

\For {$j = k$ \textbf{to} $1$}
\State $F_{0,j}(0,0) \gets \Call{SummarizeColumn}{j+1}$
    \For {$i = 1$ \textbf{to} $r_j$}
    \For {$x=0$ \textbf{to} $\sum_{i'=0}^i b_{i',j}$}
        \For {$x'=0$ \textbf{to} $\sum_{i'=0}^i b_{i',j+1}$}
            \State $F_{i,j}(x,x') \gets $\Call{CalculateOpt}{$i,j,x,x'$}.
        \EndFor
    \EndFor
    \EndFor
\EndFor
\State \Return $F_{0,0}(0,0)$.
\Statex

\Function{SummarizeColumn}{$j$}
\State $max \gets 0$
\For{$s_j \gets 0$ \textbf{to} $c_j$}
    \For{$s_{j+1} \gets 0$ \textbf{to} $c_{j+1}$}
        \State $val \gets F_{r_j,j}(s_j,s_{j+1})$
        \If {$val > max$}
            \State $max \gets val$.
        \EndIf
    \EndFor
\EndFor
\State \Return $max$.
\EndFunction
\Statex

\Function{CalculateOpt}{$i, j, x, x'$}
\State $b \gets \sum_{i'=0}^{i-1} b_{i',j}$
\State $b' \gets \sum_{i'=0}^{i-1} b_{i',j+1}$
\State $max \gets 0$
\For{$s_{i,j} \gets \max(0,x-b)$ \textbf{to} $\min (b_{i,j},x)$}
    \For{$s_{i,j+1} \gets \max(0,x'-b')$ \textbf{to} $\min (b',x')$}
        \State $val \gets F_{i-1,j}(x-s_{i,j},x'-s_{i,j+1}) - b_{i,j} \cdot s_{i,j+1} + s_i \cdot (b+b'-2x-2x'+s_{i,j}+2s_{i,j+1})$.
        \If {$val > max$}
            \State $max \gets val$.
        \EndIf
    \EndFor
\EndFor
\State \Return $max + b_{i,j} \cdot (x + x')$.
\EndFunction
\end{algorithmic}
\end{algorithm}

\begin{theorem}
$\pigdynalg$ is an $O(n^4)$ algorithm for $\maxcut$ in proper interval graphs.
\end{theorem}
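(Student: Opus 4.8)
The plan is to prove two things: that \pigdynalg{} returns the maximum cut size, and that it runs in time $O(n^4)$. Correctness will be a bookkeeping argument layered on top of Theorem~\ref{thm:recurrenceRelation}, so the real work is the running-time bound. For correctness I would argue that the algorithm is simply a bottom-up evaluation of the recurrence of Theorem~\ref{thm:recurrenceRelation} carried out in an order that respects its dependencies. First I would match each piece of code to an equation: the initialization loop sets the base values of the empty last column, realizing (\ref{eqn:RecurrenceBase1}); \textsc{SummarizeColumn}$(j+1)$ returns $\max_{s_{j+1},s_{j+2}} F_{r_{j+1},j+1}(s_{j+1},s_{j+2})$, which is exactly the right-hand side of (\ref{eqn:RecurrenceBase2}); and \textsc{CalculateOpt} evaluates the maximization of (\ref{eqn:RecurrenceStep}), with the term $b_{i,j}(x+x')$ added on return. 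Then I would verify the evaluation order: columns are processed from $j=k$ down to $1$ and, within a column, rows from $i=1$ up to $r_j$, so that every value $F_{i-1,j}(\cdot,\cdot)$ and every $F_{r_{j+1},j+1}(\cdot,\cdot)$ referenced by an entry has already been computed. Since the enclosing loops range over all arguments $(x,x')$ that can occur, the table is populated wherever it is consulted, and by Theorem~\ref{thm:recurrenceRelation} the returned $F_{0,0}(0,0)$ is the maximum cut size.

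For the running time I would first bound a single call. In \textsc{CalculateOpt}$(i,j,x,x')$ the two nested loops traverse $s_{i,j}$ and $s_{i,j+1}$ over intervals of lengths at most $b_{i,j}+1$ and $b_{i,j+1}+1$, so one call costs $O\bigl((b_{i,j}+1)(b_{i,j+1}+1)\bigr)$. For a fixed bubble $B_{i,j}$ the surrounding loops generate $O(c_j\,c_{j+1})$ pairs $(x,x')$, since $x\le\sum_{i'\le i}b_{i',j}\le c_j$ and $x'\le c_{j+1}$. Multiplying and summing over the rows of a fixed column $j$ gives a per-column bound of the form $O(c_j c_{j+1})\sum_{i=1}^{r_j}(b_{i,j}+1)(b_{i,j+1}+1)$. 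The elementary inequality $\sum_{i} b_{i,j}b_{i,j+1}\le\bigl(\sum_i b_{i,j}\bigr)\bigl(\sum_i b_{i,j+1}\bigr)\le c_j c_{j+1}$, together with $\sum_i b_{i,j}\le c_j$, $\sum_i b_{i,j+1}\le c_{j+1}$ and $\sum_i 1=r_j$, then collapses the inner sum and yields a per-column cost of $O\bigl(c_j^2 c_{j+1}^2 + c_j c_{j+1} r_j\bigr)$.

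The main obstacle is the final step: summing this over all columns and recovering $n^4$. Here I would exploit the two global budgets the model provides, namely $\sum_j c_j=n$ (each vertex lies in exactly one column) and $\sum_j r_j=O(n^2)$ (the number of bubbles, by Theorem~\ref{thm:Bubbles}). Using $c_{j+1}^2\le n^2$ and the convexity bound $\sum_j c_j^2 \le \bigl(\sum_j c_j\bigr)^2 = n^2$ gives $\sum_j c_j^2 c_{j+1}^2 \le n^2\sum_j c_j^2 \le n^4$, while $\sum_j c_j c_{j+1} r_j \le n^2\sum_j r_j = O(n^4)$. The remaining contributions are negligible: the bubble-model construction costs $O(n^2)$ by Theorem~\ref{thm:Bubbles}, and all \textsc{SummarizeColumn} calls together cost $\sum_j O(c_j c_{j+1})=O(n^2)$. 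The only point requiring care is that the standard (non-compact) representation also iterates over empty bubbles; I would note that each such entry contributes only $O(1)$, already absorbed by the $+1$ terms above, so the total remains $O(n^4)$ as claimed.
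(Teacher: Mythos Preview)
Your proposal is correct and follows essentially the same route as the paper: correctness by matching each piece of code to an equation of Theorem~\ref{thm:recurrenceRelation} and checking the evaluation order, and running time by bounding a single \textsc{CalculateOpt} call, multiplying by the $O(c_j c_{j+1})$ argument pairs, and summing over columns using $\sum_j c_j = n$ to reach $O(n^4)$. Your treatment is in fact slightly more careful than the paper's own proof: the paper sums only over bubbles with $b_{i,j}>0$ and thus silently drops the work performed on empty bubbles, whereas you carry the $+1$ factors explicitly and control the resulting $\sum_j c_j c_{j+1} r_j$ term via the $O(n^2)$ bubble bound of Theorem~\ref{thm:Bubbles}.
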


\begin{proof}
$\pigdynalg$ calculates the recurrence relation described in Theorem \ref{thm:recurrenceRelation} through dynamic programming, by scanning the columns in descending order, and the rows of each column in ascending order. Therefore, its correctness follows from Theorem \ref{thm:recurrenceRelation}.

By Theorem \ref{thm:Bubbles}, a bubble representation for $G$ can be computed in $O(n^2)$ time. The running time of function \textsc{SummarizeColumn}($j$) is $O(c_j \cdot c_{j+1})$. Summing up for all columns we get $O(\sum_{j=1}^k c_j \cdot c_{j+1}) = O\left( \left( \sum_{j=1}^k c_j \right)^2 \right)=O(n)^2$.

The running time of function \textsc{CalculateOpt}($i,j,x,x'$) is $O(b_{i,j} \cdot b_{i,j+1})$. It remains to compute the total running time of all \textsc{CalculateOpt} invocations. This time is proportional to at most
\begin{align*}
&&\sum_{i,j~\textrm{s.t.}~b_{i,j}>0} c_j \cdot c_{j+1} \cdot b_{i,j} \cdot b_{i,j+1} = \sum_{j=1}^k \left( c_j \cdot c_{j+1} \sum_{i~\textrm{s.t.}~b_{i,j}>0} b_{i,j} \cdot b_{i,j+1} \right)\\
& \leq & \sum_{j=1}^k c_j \cdot c_{j+1} \cdot c_j \cdot c_{j+1} = \sum_{j=1}^k c_j^2 \cdot c_{j+1}^2 \leq \left( \sum_{j=1}^k c_j^2 \right) \left( \sum_{j=1}^k c_{j+1}^2 \right) \leq n^4.
\end{align*}
\qed
\end{proof}


\bibliographystyle{abbrvwithurl}	
\bibliography{Approximation,GraphTheory,Mordo}

\end{document}